\documentclass[conference]{IEEEtran}
\IEEEoverridecommandlockouts
\usepackage[T1]{fontenc}
\usepackage{graphicx}
\usepackage{cite}
\usepackage{subcaption}

\usepackage{overpic}
\usepackage{amssymb}
\usepackage{amsmath}
\usepackage{amsthm}
\usepackage{amssymb}
\usepackage{steinmetz}
\usepackage{array}
\usepackage{lipsum}  
\usepackage{url}
\usepackage[dvipsnames]{xcolor}
\usepackage{mathrsfs}
\usepackage{todonotes}

\newcommand{\bsy}[1]{\boldsymbol{#1}}
\newcommand{\derpar}[2]{\dfrac{\partial {#1}} {\partial {#2}}}

\theoremstyle{plain}

\newtheorem{lemma}{Lemma}

\newtheorem{corollary}{Corollary}
\newtheorem{remark}{Remark}

\newtheorem{assumption}{Assumption}

\newcommand{\vect}[1]{\mathbf{#1}}

\def\imagunit{\mathsf{j}} 

\makeatletter
\renewcommand\boxed[2][\fboxsep]{
  {%
    \setlength\fboxsep{#1}%
    \fbox{\m@th$\displaystyle#2$}
  }%
}
\makeatother

\usepackage{fontawesome}
\usepackage{calligra}

\DeclareMathAlphabet{\mathcalligra}{T1}{calligra}{m}{n}

\usepackage{pgfplots,pgfplotstable,relsize}
\pgfplotsset{compat=newest}
\usepackage{tikz}
\usetikzlibrary{shapes,shadows,arrows,intersections}

\tikzset{
small dot/.style={fill=black,circle,scale=0.02}
}

\setcounter{MaxMatrixCols}{10}

\interdisplaylinepenalty=2500

\begin{document}

\title{Cram{\'e}r-Rao Bounds for Near-Field Localization}


\author{
\IEEEauthorblockN{Andrea de Jesus Torres\IEEEauthorrefmark{1}, Antonio A. D'Amico\IEEEauthorrefmark{1}, Luca Sanguinetti\IEEEauthorrefmark{1},
Moe Z. Win\IEEEauthorrefmark{2}
\thanks{\newline \indent The research was supported by the MIT-UNIPI grant (VIII call) from MISTI Global Seed Funds in the framework of the MIT-Italy Program. L. Sanguinetti and A. A. D'Amico were also partially supported by the Italian Ministry of Education and Research (MIUR) in the framework of the CrossLab project (Departments of Excellence).}}
\IEEEauthorblockA{\IEEEauthorrefmark{1}\small{Dipartimento di Ingegneria dell'Informazione, University of Pisa, Italy}}
\IEEEauthorblockA{\IEEEauthorrefmark{2}\small{Massachusetts Institute of Technology, Cambridge, USA}\vspace{-0.3cm}}
}

\maketitle

\begin{abstract}
Multiple antenna arrays play a key role in wireless networks for communications but also localization and sensing. The use of large antenna arrays pushes towards a propagation regime in which the wavefront is no longer plane but spherical. This allows to infer the position and orientation of an arbitrary source from the received signal without the need of using multiple anchor nodes. To understand the fundamental limits of large antenna arrays for localization, this paper fusions wave propagation theory with estimation theory, and computes the Cram{\'e}r-Rao Bound (CRB) for the estimation of the three Cartesian coordinates of the source on the basis of the electromagnetic vector field, observed over a rectangular surface area. To simplify the analysis, we assume that the source is a dipole, whose center is located on the line perpendicular to the surface center, with an orientation a priori known. Numerical and asymptotic results are given to quantify the CRBs, and to gain insights into the effect of various system parameters on the ultimate estimation accuracy. It turns out that surfaces of practical size may guarantee a centimeter-level accuracy in the mmWave bands.

\end{abstract}
\smallskip
\begin{IEEEkeywords}
Cram{\'e}r-Rao bound, near field, spherical wavefront, performance analysis, performance bound, source localization, electric field, planar electromagnetic surfaces.
\end{IEEEkeywords}

\section{Introduction}

The estimation accuracy of signal processing algorithms for positioning is fundamentally limited by the quality of the underlying measurements. For time-based measurements, high resolution and high accuracy can only be obtained when a large bandwidth is available. Improvements can be achieved by using multiple anchor nodes. Antenna arrays have thus far only played a marginal role in positioning since the small arrays of today's networks provide little benefit. With future networks, the situation may change significantly. Indeed, the 5G technology standard is envisioned to operate in bands up to 86\,GHz~\cite{Lee2018Spectrum5G}, while 6G research is already focusing on the so-called sub-terahertz (THz) bands, i.e., in the range 100 -- 300\,GHz. The small wavelengths of high-frequency signals make it practically possible to envision arrays with a very large number of finely tailorable antennas, as never seen before. The advent of large spatially-continuous electromagnetic surfaces interacting with wireless signals pushes even further this vision. Research in this direction is taking place under the names of Holographic MIMO~\cite{Huang2020,Pizzo2019a,dardari2020holographic}, large intelligent surfaces~\cite{Hu2018b}, and reconfigurable  intelligent surfaces~\cite{Basar2019a,DiRenzo2020}. All this opens new dimensions and brings new opportunities for communications but also for localization and sensing.

An unexplored and unintentional side-effect of using large arrays or surfaces combined with high carrier frequencies, is to push the electromagnetic propagation regime from the Fraunhofer far-field region towards the Fresnel near-field region~\cite{dardari2020holographic}. This opens the door to new signal processing algorithms that exploit the unique near-field properties to pinpoint the position of the source with high accuracy~\cite{Hu2018b,Dardari2020Spherical,Rusek2020SphericalLIS,AlegriaLISPositioning2019,JinLensArray2020}. In this context, the question arises of the ultimate accuracy that can be achieved in localization operations. This is important in order to provide benchmarks for evaluating the performance of actual estimators. Motivated by this, this paper starts from first electromagnetic principles and provides the vector field observation over a rectangular spatial region, as a function of the radiation vector at the source. This is then used to compute the CRB for its three Cartesian coordinates. To simplify the analysis, we consider a dipole, located on the line perpendicular to the surface center, with an orientation a priori known.


%
 \begin{figure}
	
	\centering
	\begin{overpic}[width=1\columnwidth]{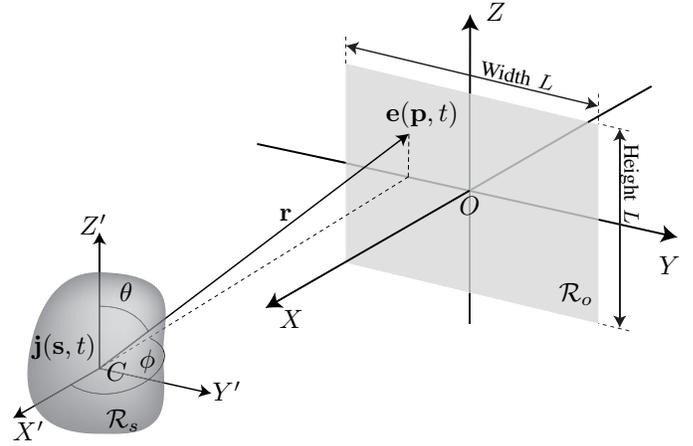}
	\put(67,33){$O$} 
	\put(40,16){$X$}
	\put(97,24){$Y$}
	\put(71,62){$Z$}
	
	\put(14,8){$C$}
	\put(0,-1){$ X'$}
	\put(30,5){$Y'$}
	\put(10,30){$Z'$}

	\put(56,47){${\mathbf{e}}({\mathbf{p}},t)$}
	
	\put(3,12){${\mathbf{j}}({\mathbf{s}},t)$}

	\put(14,1){$\mathcal R_s$}
	\put(82,20){$\mathcal R_o$}
	
	\put(40,32){$\mathbf{r}$}
	\put(16,20){$\theta$}
	\put(19,10){$\phi$}
	
	\put(70,54){\rotatebox{-13}{\footnotesize{Width $L$}}}
	
    \put(91.1,43.6){\rotatebox{-90}{\footnotesize{Height $L$}}}

	\end{overpic}	
\caption{Geometry of the considered system.}\vspace{-0.5cm}
	\label{fig:source_volume}
\end{figure}

\section{Signal model and problem formulation}

Consider the system depicted in Fig.~\ref{fig:source_volume} in which an electric current density ${\bf j}({\bf s}, t)$, inside a source region $\mathcal R_s$, generates an electric field ${\bf e}({\bf p}, t)$ at a generic location ${\bf p}$. We consider only monochromatic sources and fields of the form ${\bf j}(\vect{s},t) = \mathrm{Re}\left\{{\bf j}(\vect{s}) e^{\imagunit \omega t}\right\}$ and ${\bf e}(\vect{p}, t) = \mathrm{Re}\left\{{\bf e}(\vect{p}) e^{\imagunit \omega t}\right\}$, respectively. In this case, Maxwell's equations can be written only in terms of the current and field \textit{phasors}, ${\bf j}(\vect{s})$ and ${\bf e}(\vect{p})$ \cite[Ch. 1]{ChewBook}.

We denote by $C$ the \textit{centroid} of $\mathcal R_s$ and assume that the electric field ${\bf e}(\vect{p})$, produced by  ${\bf j}(\vect{s})$, is measured over a region $\mathcal R_o$ (\textit{observation region}) outside $\mathcal R_s$. The electromagnetic field propagates in a homogeneous and isotropic medium with neither obstacles nor reflecting surfaces. In other words, there is only a line-of-sight (LOS) link between $\mathcal R_s$ and $\mathcal R_o$. 

\subsection{Signal model}
The measured field is the sum of ${\bf e}(\vect{p})$ and a random noise field ${\bf n}(\vect{p})$, i.e.,
\begin{equation}
\label{SigNoise}
{\boldsymbol \xi} (\vect{p})={\bf e}(\vect{p})+{\bf n}(\vect{p})
\end{equation}
where ${\bf n}(\vect{p})$ is generated by electromagnetic sources outside $\mathcal R_s$. Consider a cartesian coordinate system with the origin in $C$, as shown in Fig. 1, and make the following assumptions.
\begin{assumption}
The observation volume is a square region parallel to the $Y'Z'$ coordinate plane. In particular, $\mathcal R_o=\big\{ (x',y',z'): x'=x'_o, |y'-y'_o| \le L/2, |z'-z'_o| \le L/2 \big\}$, where $(x'_o,y'_o,z'_o)$ are the cartesian coordinates of the center $O$ of $\mathcal R_o$ in the system $CX'Y'Z'$. 
\end{assumption}
The cartesian system $OXYZ$, shown in Fig. 1, is obtained by $CX'Y'Z'$ through a pure translation. The position of $C$ in the system $OXYZ$ is given by the three coordinates $(x_C,y_C,z_C)$. Accordingly, we have $x'=x-x_C$, $y'=y-y_C$ and $z'=z-z_C$.
\begin{assumption}
Let $r_o$ be the distance of $C$ from $\mathcal R_o$ and denote by $l_s$ the largest dimension of $\mathcal R_s$. We assume that $r_o \gg l_s$ and $r_o \gg 2 l_s^2 /\lambda$, where $\lambda=2 \pi c /\omega$ is the wavelength. These conditions define the so-called far-field or Fraunhofer radiation region~\cite[Ch. 14]{OrfanidisBook}.
\end{assumption}
In the Fraunhofer radiation region, the electric field ${\bf e}(\vect{p})$ can be approximated as~\cite[Ch. 14]{OrfanidisBook}
\begin{equation}
\label{E_P}
{\bf e}(\vect{p})= G(r) \left[R_{\theta}(\theta,\phi) {\bsy{\hat{\theta}}}+R_{\phi}(\theta,\phi) {\bsy{\hat{\phi}}} \right]
\end{equation}
where $(r,\theta,\phi)$ are the spherical coordinates of ${\bf p} \in \mathcal R_o$,
\begin{equation}
G(r)= -\imagunit k\eta \dfrac{e^{-\imagunit kr}}{4 \pi r}  
\end{equation}
is the scalar Green's function, $k=2 \pi / \lambda$ is the wavenumber, $\eta$ is the characteristic impedance of the medium while ${\bsy{\hat{\theta}}}$ and ${\bsy{\hat{\phi}}}$ are unit vectors along the $\theta$ and $\phi$ coordinate curves. Functions $R_{\theta}(\theta,\phi)$ and $R_{\phi}(\theta,\phi)$ are the components along the ${\bsy{\hat{\theta}}}$ and ${\bsy{\hat{\phi}}}$ directions, respectively, of the \textit{radiation vector} ${\bf R}(\theta,\phi)$. This is related to the source current distribution by the following equation~\cite{OrfanidisBook}
\begin{equation}
\label{RadVect}
{\bf R}(\theta,\phi)=\int_{\mathcal R_s} {\bf j}({\bf s}) e^{\imagunit {\bf k}(\theta,\phi) \cdot {\bf s}} d {\bf s}
\end{equation}
where ${\bf k}(\theta,\phi)=k \hat{\bf r}$ is the \textit{wavenumber vector}, $\hat{\bf r}$ is the unit vector in the radial direction, and ${\bf k}(\theta,\phi) \cdot {{\bf s}}$ is the dot product between ${\bf k}(\theta,\phi)$ and $\bf s$. From~\eqref{RadVect}, it follows that the electric field ${\bf e}(\vect{p})$ depends on the current distribution ${\bf j}(\vect{s})$ through $R_{\theta}(\theta,\phi)$ and $R_{\phi}(\theta,\phi) $.

Denote by $\xi_x(\vect{p})$, $\xi_y(\vect{p})$ and $\xi_z(\vect{p})$, the cartesian components of ${\boldsymbol \xi} (\vect{p})$ along the $\hat{\bf x}$, $\hat{\bf y}$ and $\hat{\bf z}$ directions, respectively. From \eqref{SigNoise}, we have
\begin{equation}
\label{SigNoiseX}
\xi_x(\vect{p})= e_x(\vect{p})+n_x(\vect{p})
\end{equation}
\begin{equation}
\label{SigNoiseY}
\xi_y(\vect{p})= e_y(\vect{p})+n_y(\vect{p})
\end{equation}
\begin{equation}
\label{SigNoiseZ}
\xi_z(\vect{p})= e_z(\vect{p})+n_z(\vect{p})
\end{equation}
where
\begin{equation}
\label{ }
\nonumber
e_u(\vect{p})={\bf e}(\vect{p}) \cdot \hat{\bf u}
\end{equation}
with $u \in \{x,y,z\}$. By using \eqref{E_P} we get
\begin{align}
\label{ex}
e_x(\vect{p})&=G(r) \left[R_{\theta}(\theta,\phi) \cos \theta \cos \phi-R_{\phi}(\theta,\phi) \sin \phi \right]\\
\label{ey}
e_y(\vect{p})&=G(r) \left[R_{\theta}(\theta,\phi) \cos \theta \sin \phi+R_{\phi}(\theta,\phi) \cos \phi \right]\\
\label{ez}
e_z(\vect{p})&=-G(r) R_{\theta}(\theta,\phi) \sin \theta.
\end{align}
A statistical model for the random field ${\bf n}(\vect{p})$ is needed. A common assumption (e.g., \cite{Jensen2008channel}--\hspace{1sp}\cite{Gruber2008EIT}) is to model ${\bf n}(\vect{p})$ as a spatially uncorrelated zero-mean complex Gaussian process with correlation function 
\begin{equation}
\label{ }
\mathrm{E} \left\{{\bf n}(\vect{p}) {\bf n}^{\dagger}(\vect{p}') \right\}=\sigma^2 {\bf I} \delta(\vect{p}-\vect{p}')
\end{equation}
where ${\bf I}$ is the identity matrix, $\delta(\cdot)$ is the Dirac's delta function, and $\sigma^2$ is measured in $\mathrm V ^2$, where $\mathrm V$ indicates volts \cite{Gruber2008EIT}.

\subsection{Problem formulation}
We aim at computing the Cram\'er-Rao bound (CRB) for the estimation of the position of $C$ based on the noisy observations ${\boldsymbol \xi} (\vect{p})$. As noticed earlier, this requires some information about the current distribution inside $\mathcal R_s$. Denote by ${\bf u}=(x_C,y_C,z_C)$ the vector collecting the \emph{unknown} coordinates of $C$ with respect to the cartesian system $OXYZ$. The CRB for the estimation of the $i$th entry of ${\bf u}$ is \cite{Kay1993a}
\begin{equation}
\label{eq:CRBs}	
\mathrm{CRB}(u_i) = \left[\mathbf{F}^{-1}\right]_{ii}
\end{equation}
where ${\bf F}$ is the Fisher's Information Matrix (FIM). The latter is a $3\times 3$ hermitian matrix, whose elements are computed as:
\begin{equation}
\label{FIM}
\begin{split}
[\mathbf{F}]_{m,n}=\dfrac{1}{\sigma^2}\int\limits_{-\frac{L}{2}}^{\frac{L}{2}} \int\limits_{-\frac{L}{2}}^{\frac{L}{2}}&\left(\derpar{e_x}{u_m} \derpar{e_x^\ast}{u_n}+\right.\\
&\hphantom{i}\left.\derpar{e_y}{u_m} \derpar{e_y^\ast}{u_n} +\derpar{e_z}{u_m} \derpar{e_z^\ast}{u_n}\right)dy dz
\end{split}
\end{equation}
where the integration is performed over the observation region $\mathcal R_o$. For notational simplicity, in \eqref{FIM} the dependence of $e_x$, $e_y$ and $e_z$ on $\bf p$ has been omitted.  
\begin{remark}
A different approach for estimating the position of $C$ is to make use of a {\textrm scalar}, instead of vectorial, field. For example, one could use only one of the three components of ${\bf e}(\vect{p})$. This may simplify the analysis but would result in lower performance, i.e., a larger CRB. An alternative approach is to consider a scalar field that is related to the component of the Poynting vector perpendicular to each point of the planar region. This component is proportional to $ \|{\bf e}(\vect{p})\|^2 \sin \theta \cos \phi$, and the associated scalar field is
\begin{align}\notag
E& \triangleq e^{-\imagunit kr} \sqrt{ \|{\bf e}(\vect{p})\|^2 \sin \theta \cos \phi} \\&=k\eta \dfrac{e^{-\imagunit kr} \sqrt{x_C}}{4 \pi r^{3/2}}  \sqrt{R^2_{\theta}(\theta,\phi)+ R^2_{\phi}(\theta,\phi)}.\label{PVScalarField}
\end{align}
In the case of an isotropic radiating source, $R^2_{\theta}(\theta,\phi)+ R^2_{\phi}(\theta,\phi)$ is independent of $\theta$ and $\phi$, and thus ~\eqref{PVScalarField} reduces to the scalar model considered in~\cite[Eq. (2)]{Hu2018b}. We stress that this scalar model represents a specific case, which is not valid in general. We will use~\eqref{PVScalarField} in the numerical analysis for comparisons.
\end{remark}

\section{CRB computation  with a priori information about the current distribution}
\label{sec:est3D}
To evaluate and quantify the CRB, we assume that the current source is a dipole of length $l_s$, as shown in Fig.~\ref{fig:holo-surface}, and make the following assumption. 
\begin{assumption}\label{assumption-1}
The dipole is oriented along the $z-$axis and the orientation is known. 
\end{assumption}
Assumption~\ref{assumption-1} implies that \textit{full information} about the source current distribution is available. In this case, we have (e.g.~\cite[Ch.\ 4]{Balanis}):
\begin{equation}
\label{F_teta}
R_{\theta}(\theta,\phi)=l_s I_{in} \sin \theta \quad \quad R_{\phi}(\theta,\phi)=0
\end{equation}
where $I_{in}$ is the uniform current level in the dipole. 
Plugging \eqref{F_teta} into \eqref{ex}--\eqref{ez} yields
\begin{align}
\label{ExCaseStudy}
e_x&= \phantom{-}\imagunit \chi \dfrac{e^{-\imagunit kr}}{r} \sin \theta \cos \theta \cos \phi\\ 
\label{EyCaseStudy}
e_y&= -\imagunit \chi \dfrac{e^{-\imagunit kr}}{r} \sin \theta \cos \theta \sin \phi\\ 
\label{EzCaseStudy}
e_z&=\phantom{-}\imagunit \chi \dfrac{e^{-\imagunit kr}}{r} \sin^2 \theta 
\end{align}
where $\chi= \dfrac{\eta I_{in}}{2} \dfrac{l_s}{\lambda}$
is measured in volts. The dependence of $e_x$, $e_y$ and $e_z$, on $(x_C,y_C,z_C)$ is hidden in $(r,\theta,\phi)$. Indeed, we have
\begin{align}
\label{SferCart1}
r&=\sqrt{x^2_C+(y-y_C)^2+(z-z_C)^2}\\
\label{SferCart2}
\cos \theta &=\dfrac{z-z_C}{r}\\
\label{SferCart3}
\tan \phi &=-\dfrac{y-y_C}{x_C}
\end{align}
from which it follows that
\begin{align}
\label{}
\sin \theta \cos \theta \cos \phi&=\dfrac{x_C(z-z_C)}{r^2}\\
\sin \theta \cos \theta \sin \phi&=\dfrac{(y-y_C)(z-z_C)}{r^2}\\
\sin^2 \theta&=1-\dfrac{(z-z_C)^2}{r^2}.
\end{align}
By using the above identities into \eqref{ExCaseStudy}--\eqref{EzCaseStudy} yields
\begin{align}
\label{ExCaseStudy.2}
e_x&= \phantom{-}\imagunit \chi e^{-\imagunit kr} \dfrac{x_C(z-z_C)}{r^3}\\
\label{EyCaseStudy.2}
e_y&= -\imagunit \chi e^{-\imagunit kr} \dfrac{(y-y_C)(z-z_C)}{r^3}\\
\label{EzCaseStudy.2}
e_z&=\phantom{-} \imagunit \chi \dfrac{e^{-\imagunit kr}}{r}\left[1-\dfrac{(z-z_C)^2}{r^2}\right].
\end{align}
The computation of the Fisher's information matrix through \eqref{FIM} requires the derivatives of $e_x$, $e_y$ and $e_z$ with respect to $x_C$, $y_C$ and $z_C$. These can be obtained from \eqref{ExCaseStudy.2}--\eqref{EzCaseStudy.2} after lengthy but standard calculations, not reported here for space limitations. They are provided in the extended version of this paper~\cite[App. A]{damico2021}.
\begin{figure}[t!]
\centering
    \begin{overpic}[width=1\columnwidth]{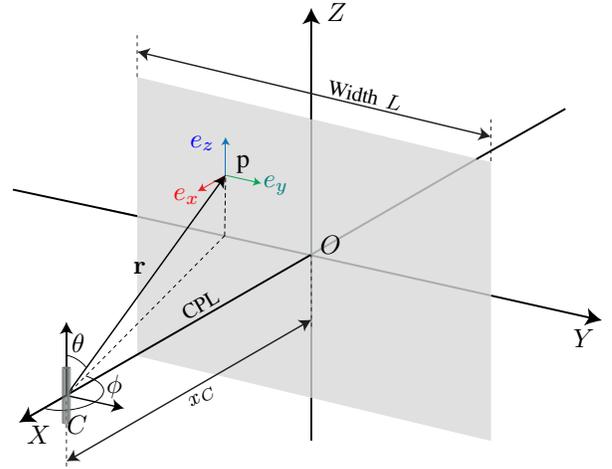}
    \put(50,38){$O$} 
    \put(88,24){$Y$}
    \put(51,73){$Z$}
    
    \put(12,11){$C$}
    \put(6,9.5){$X$}

    \put(37.5,51){p}
    \put(28,46){$\color{red}{e_x}$}
    \put(41.5,48){$\color{teal}{e_y}$}
    \put(30.5,54){$\color{blue}{e_z}$}
    
    \put(22,35){$\mathbf{r}$}
    \put(18,17){$\phi$}
    \put(12.7,23.2){$\theta$}
    
    \put(51,62){\rotatebox{-13}{\footnotesize{Width $L$}}}
    \put(29,28){\rotatebox{30}{\footnotesize\text{CPL}}}
    \put(30,15){\rotatebox{30}{\footnotesize\text{$ x_C$}}}

    \end{overpic}
    \caption{Source model and CPL assumption.}\vspace{-0.7cm}
    \label{fig:holo-surface}
\end{figure}

\subsection{Analysis in the CPL case} 
The expression of FIM for a generic position of the dipole is too cumbersome to gain insights into the problem. The analysis becomes much easier if the following assumption is made.
\begin{assumption}[CPL assumption]\label{CPL assumption}
The center $C$ of the dipole is on the line perpendicular to  $\mathcal R_o$ passing through the point $O$, as shown in Fig.~\ref{fig:holo-surface}. 
\end{assumption}
Under Assumption~\ref{CPL assumption}, we have that $y_C=0$ and $z_C=0$ (but unknown), and the Fisher's information matrix becomes diagonal (e.g.,~\cite{Hu2018b}). The following result is found.
\begin{lemma}
Under Assumption~\ref{CPL assumption}, the CRBs for the estimation of $x_C$, $y_C$ and $z_C$, are given by
\begin{align}
\label{eq:crb_x}
\mathrm{CRB}(x_C)=[\mathbf{F}]^{-1}_{11}&= \dfrac{\mathrm{SNR}^{-1}}{\left(\mathscr{I}_1 +  \mathscr{I}_2\right)}\\ \label{eq:crb_y}
\mathrm{CRB}(y_C)=[\mathbf{F}]^{-1}_{22}&= \dfrac{\mathrm{SNR}^{-1}}{\left(\mathscr{I}_3 +  \mathscr{I}_4\right)}\\ \label{eq:crb_z}
\mathrm{CRB}(z_C)=[\mathbf{F}]^{-1}_{33}&= \dfrac{\mathrm{SNR}^{-1}}{\left(\mathscr{I}_5 +  \mathscr{I}_6\right)}	\end{align}
where $\mathrm{SNR} =|\chi|^2 / \sigma^2$ is the signal-to-noise ratio
and 
\begin{align}
\label{I1}
\mathscr{I}_1& \triangleq  k^2 x_C^2 \int\limits_{{\scriptscriptstyle-L/2}}^{{\scriptscriptstyle L/2}} \int\limits_{{\scriptscriptstyle -L/2}}^{{\scriptscriptstyle L/2}} \dfrac{x_C^2+y^2}{r^6}dydz\\
\label{I2}
\mathscr{I}_2& \triangleq \int\limits_{{\scriptscriptstyle-L/2}}^{{\scriptscriptstyle L/2}} \int\limits_{{\scriptscriptstyle -L/2}}^{{\scriptscriptstyle L/2}}\dfrac{x_C^4+x_C^2y^2-x_C^2z^2+y^2z^2+z^4}{r^{8}}dydz\\
\label{I3}
\mathscr{I}_3 &\triangleq k^2 \int\limits_{{\scriptscriptstyle-L/2}}^{{\scriptscriptstyle L/2}} \int\limits_{{\scriptscriptstyle -L/2}}^{{\scriptscriptstyle L/2}}\dfrac{y^2(x_C^2+y^2)}{r^6}dydz\\
\label{I4}
\mathscr{I}_4&\triangleq \int\limits_{{\scriptscriptstyle-L/2}}^{{\scriptscriptstyle L/2}} \int\limits_{{\scriptscriptstyle -L/2}}^{{\scriptscriptstyle L/2}} \dfrac{y^4+x_C^2y^2-y^2z^2+x_C^2z^2+z^4}{r^{8}}dydz\\
\label{I5}
\mathscr{I}_5&\triangleq k^2 \int\limits_{{\scriptscriptstyle-L/2}}^{{\scriptscriptstyle L/2}} \int\limits_{{\scriptscriptstyle -L/2}}^{{\scriptscriptstyle L/2}}\dfrac{z^2(x_C^2+y^2)}{r^6}dydz\\
\label{I6}
\mathscr{I}_6&\triangleq \int\limits_{{\scriptscriptstyle-L/2}}^{{\scriptscriptstyle L/2}} \int\limits_{{\scriptscriptstyle -L/2}}^{{\scriptscriptstyle L/2}}\dfrac{(x_C^2+y^2)\left(x_C^2+y^2+4z^2\right)}{r^{8}}dydz.
\end{align}
\end{lemma}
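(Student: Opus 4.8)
The plan is to write each Cartesian component in \eqref{ExCaseStudy.2}--\eqref{EzCaseStudy.2} as $e_u=\imagunit\chi\,a_u\,e^{-\imagunit kr}$, $u\in\{x,y,z\}$, where $a_x=x_C(z-z_C)/r^3$, $a_y=-(y-y_C)(z-z_C)/r^3$ and $a_z=r^{-1}[1-(z-z_C)^2/r^2]$ are \emph{real} functions of position. Writing $r_m:=\partial r/\partial u_m$ for $u_m\in\{x_C,y_C,z_C\}$, differentiation gives $\partial e_u/\partial u_m=\imagunit\chi\,(\partial a_u/\partial u_m-\imagunit k\,a_u\,r_m)\,e^{-\imagunit kr}$. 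Since $\partial a_u/\partial u_m$ and $a_u r_m$ are both real, the diagonal products collapse with no phase--amplitude cross term, namely $(\partial e_u/\partial u_m)(\partial e_u/\partial u_m)^\ast=|\chi|^2\bigl[(\partial a_u/\partial u_m)^2+k^2a_u^2r_m^2\bigr]$. Substituting into \eqref{FIM} and factoring out $|\chi|^2/\sigma^2=\mathrm{SNR}$ then splits each diagonal FIM entry into a wavenumber-weighted integral (the $k^2$ term) and a pure amplitude integral.

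Next I would show that the FIM is diagonal under Assumption~\ref{CPL assumption}. With $y_C=z_C=0$ the distance $r=\sqrt{x_C^2+y^2+z^2}$ is even in both $y$ and $z$, and the observation square is symmetric under $y\mapsto-y$ and $z\mapsto-z$. The key point is that differentiation by $x_C$ preserves both parities (indeed $r_1=x_C/r$ is even in $y$ and $z$), differentiation by $y_C$ flips the $y$-parity ($r_2=-y/r$), and differentiation by $z_C$ flips the $z$-parity ($r_3=-z/r$); moreover $\partial a_u/\partial u_m$ and $a_u r_m$ undergo the same flip, so they share a common parity. Consequently every summand of $[\mathbf{F}]_{m,n}$ inherits one and the same parity, fixed only by the differentiated pair $(m,n)$: even in both variables when $m=n$, and odd in at least one of $y,z$ when $m\neq n$. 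The off-diagonal entries therefore vanish over the symmetric region, whence $\mathrm{CRB}(u_i)=[\mathbf{F}^{-1}]_{ii}=1/[\mathbf{F}]_{ii}$.

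It then remains to evaluate the two pieces of each diagonal entry at the CPL point. For the wavenumber piece I would use the identity $a_x^2+a_y^2+a_z^2=(x_C^2+y^2)/r^4$, which follows from $a_z=(x_C^2+y^2)/r^3$ together with $x_C^2+y^2+z^2=r^2$; multiplying by $r_m^2\in\{x_C^2/r^2,\,y^2/r^2,\,z^2/r^2\}$ reproduces verbatim the integrands of $\mathscr{I}_1,\mathscr{I}_3,\mathscr{I}_5$ in \eqref{I1},~\eqref{I3},~\eqref{I5}. For the amplitude piece I would differentiate $a_x,a_y,a_z$ with respect to each coordinate, set $y_C=z_C=0$, and sum the squares; a common factor $r^2=x_C^2+y^2+z^2$ then cancels between numerator and denominator, leaving the degree-four polynomials over $r^8$ that define $\mathscr{I}_2,\mathscr{I}_4,\mathscr{I}_6$ in \eqref{I2},~\eqref{I4},~\eqref{I6}.

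The main obstacle is purely computational rather than conceptual. The amplitude derivatives $\partial a_u/\partial u_m$ are unwieldy, and checking that $\sum_u(\partial a_u/\partial u_m)^2$ collapses to the stated polynomial divided by $r^8$ demands careful expansion and cancellation; for $u_m=x_C$, for instance, one verifies that the $x_C^4z^2$ and $x_C^2z^4$ contributions vanish while a factor $r^2$ factors cleanly out of the numerator. These are exactly the ``lengthy but standard'' manipulations already deferred to~\cite[App.~A]{damico2021}; all the genuinely structural content lies in the real/imaginary splitting of the first step and the parity argument of the second.
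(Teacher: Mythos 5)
Your proposal is correct: the real/imaginary splitting $\partial e_u/\partial u_m=\imagunit\chi(\partial_{u_m} a_u-\imagunit k\,a_u\,\partial_{u_m}r)e^{-\imagunit kr}$, the parity argument for diagonality of $\mathbf{F}$ under the CPL assumption, and the identity $\sum_u a_u^2=(x_C^2+y^2)/r^4$ all hold, and the amplitude-derivative sums do collapse to the stated numerators over $r^8$ (I checked all three). This is essentially the same direct-computation route the paper takes, merely deferring the same ``lengthy but standard'' derivative bookkeeping to the appendices of the extended version.
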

\begin{proof}
See~\cite[Appendix B]{damico2021}.
\end{proof}
\vspace{-0.3cm}
Although the CPL assumption results in a considerable simplification (as it makes FIM diagonal), the expressions \eqref{I1}--\eqref{I6} are still rather complicated. Further simplifications are provided in the following corollary in the regime $x_C\gg\lambda$.
\begin{corollary}If $x_C\gg\lambda$, then
\begin{align}
\label{eq:crb_x_approx}
\mathrm{CRB}(x_C)&\approx \dfrac{\mathrm{SNR}^{-1}}{\mathscr{I}_1}\\
\label{eq:crb_y_approx}
\mathrm{CRB}(y_C)&\approx \dfrac{\mathrm{SNR}^{-1}}{\mathscr{I}_3}
\end{align}
where $\mathscr{I}_1$ can be computed in closed-form
\begin{align}
\label{eq:I1}
\mathscr{I}_1=\dfrac{\rho k^2}{2(1+\rho^2)}\left[\dfrac{(7+6\rho^2)}{\sqrt{1+\rho^2}} \arctan \dfrac{\rho}{\sqrt{1+\rho^2}}+\dfrac{\rho}{(1+2\rho^2)}\right]
\end{align}
with $\rho \triangleq L/x_C$. As for $\mathscr{I}_3$, we have
\begin{equation}
\label{I3.1}
\mathscr{I}_3^{(l)} < \mathscr{I}_3 \le \mathscr{I}_3^{(u)}
\end{equation}
with
\begin{align}
\label{I3.low}
k^{-2}\mathscr{I}_3^{(l)}&=\dfrac{3\pi}{8} \ln (1+ \rho^2) - \dfrac{\pi}{16} \dfrac{\rho^2(5 \rho^2 +6)}{(1+\rho^2)^2}\\
\label{I3.up}
k^{-2}\mathscr{I}_3^{(u)}&= \dfrac{3\pi}{8} \ln (1+ 2 \rho^2) - \dfrac{\pi}{4} \dfrac{\rho^2(5 \rho^2 +3)}{(1+2\rho^2)^2}
\end{align}
Analogously, we have that
\begin{equation}
\label{I5.1}
\mathscr{I}_5^{(l)} < \mathscr{I}_5 \le \mathscr{I}_5^{(u)}
\end{equation}
with
\begin{align}
\label{I5.low}
k^{-2}\mathscr{I}_5^{(l)}&=\dfrac{\pi}{8} \ln (1+ \rho^2) + \dfrac{\pi}{16} \dfrac{\rho^2(\rho^2 -2)}{(1+\rho^2)^2}\\
\label{I5.up}
k^{-2}\mathscr{I}_5^{(u)}&= \dfrac{\pi}{8} \ln (1+ 2 \rho^2) + \dfrac{\pi}{4} \dfrac{\rho^2(\rho^2 -1)}{(1+2\rho^2)^2}
\end{align}
Finally, $\mathscr{I}_6$ is given by
\begin{align}
\label{I6.1}\notag
x_C^2\mathscr{I}_6 &=\dfrac{\rho(18 \rho^4+38 \rho^2 +17)}{4 (1+\rho^2)^{5/2}}\arctan\left(\frac{\rho}{\sqrt{\rho^2+1}}\right)\\
&+\dfrac{\rho^2(6 \rho^4+2 \rho^2 -1)}{4(1+\rho^2)^2(1+2\rho^2)^2}.
\end{align}
\end{corollary}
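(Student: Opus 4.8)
The plan is to handle the three assertions by one scaling argument for the approximations and by direct iterated integration for the closed forms, reserving a geometric sandwich for the two integrals that resist a closed form.

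First I would establish \eqref{eq:crb_x_approx}--\eqref{eq:crb_y_approx} by non-dimensionalizing every integral in \eqref{I1}--\eqref{I6} through $y=x_C\tilde y$, $z=x_C\tilde z$, so that $r=x_C\sqrt{1+\tilde y^2+\tilde z^2}$ and the domain becomes $|\tilde y|,|\tilde z|\le\rho/2$. Counting powers of $x_C$ shows that $\mathscr I_1,\mathscr I_3,\mathscr I_5$ collapse to $k^2$ times a dimensionless integral of $\rho$ alone, whereas $\mathscr I_2,\mathscr I_4,\mathscr I_6$ collapse to $x_C^{-2}$ times such an integral. Hence $\mathscr I_2/\mathscr I_1$ and $\mathscr I_4/\mathscr I_3$ are both $O\big((kx_C)^{-2}\big)=O\big((\lambda/x_C)^2\big)$, which vanishes for $x_C\gg\lambda$; dropping $\mathscr I_2$ and $\mathscr I_4$ in \eqref{eq:crb_x}--\eqref{eq:crb_y} then gives the claim. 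The only point to verify is that the ratio of the two dimensionless integrals stays bounded, which holds because each is a finite, continuous function of $\rho$.

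Second, for the closed forms \eqref{eq:I1} and \eqref{I6.1} I would integrate first over $z$, writing $a^2=x_C^2+y^2$ and using the tabulated primitives $\int dz/(a^2+z^2)^n$ and $\int z^2\,dz/(a^2+z^2)^n$ for $n=3,4$, each of which returns rational terms plus an $\arctan(z/a)$ term. The decisive feature of the numerators of $\mathscr I_1$ and $\mathscr I_6$ is that, once expressed through $a^2$, they leave no bare factor of $y^2$ in front of the $\arctan$ after the $z$-integration, so the remaining integrand in $y$ is elementary; integrating it over $[-L/2,L/2]$ with the analogous $\arctan$/rational primitives yields the stated expressions, the factors $(1+\rho^2)$ and $(1+2\rho^2)$ reflecting the normalized mid-edge and corner distances of the observation square.

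Third, the same procedure applied to $\mathscr I_3$ and $\mathscr I_5$ leaves a surviving $y^2$ (respectively a coefficient) multiplying the $\arctan$ in the outer $y$-integral, which is no longer elementary; this is the main obstacle and the reason \eqref{I3.1} and \eqref{I5.1} are stated as bounds. I would bypass it geometrically: since the integrands are non-negative, the integral over the square lies between the integrals over the inscribed disk of radius $L/2$ and the circumscribed disk of radius $L/\sqrt2$. Passing to polar coordinates $y=s\cos\psi$, $z=s\sin\psi$ makes $r^2=x_C^2+s^2$ independent of $\psi$, so the angular integrals of $\cos^2\psi$, $\cos^4\psi$, $\sin^2\psi$ and $\sin^2\psi\cos^2\psi$ reduce to constants and the radial integral becomes elementary under $t=s^2$, producing the $\ln(1+\cdot)$ and rational terms. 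Evaluating at the two radii (whose squared normalized values are $\rho^2$ and $2\rho^2$) delivers $\mathscr I_3^{(l)},\mathscr I_3^{(u)}$ and $\mathscr I_5^{(l)},\mathscr I_5^{(u)}$, with the inclusions of the disks in and around the square furnishing the inequalities; the remaining work is purely bookkeeping of elementary primitives.
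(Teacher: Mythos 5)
Your three-part strategy is sound and, as far as one can tell, is essentially the paper's own route (the paper defers its proof to the extended version, but the printed bounds betray the method): the non-dimensionalization $y=x_C\tilde y$, $z=x_C\tilde z$ correctly shows $\mathscr I_2/\mathscr I_1$ and $\mathscr I_4/\mathscr I_3$ are $O\big((k x_C)^{-2}\big)$ at fixed $\rho$, which justifies \eqref{eq:crb_x_approx}--\eqref{eq:crb_y_approx}; the iterated integration for $\mathscr I_1$ and $\mathscr I_6$ does close, because the only transcendental term surviving the $z$-integration is $\arctan(c/a)/a^3$ with $a^2=x_C^2+y^2$, whose $y$-integral reduces by parts (via $\int a^{-3}dy = y/(x_C^2 a)$) to a rational integral; and your diagnosis of why $\mathscr I_3,\mathscr I_5$ resist a closed form is right, since the extra $y^2$ turns the needed primitive into $\mathrm{arcsinh}(y/x_C)$, whose product with the remaining kernel is non-elementary. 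Most convincingly, carrying out your polar-coordinate disk computation gives exactly $\tfrac{3\pi}{8}\ln(1+\mu)-\tfrac{\pi}{16}\tfrac{\mu(5\mu+6)}{(1+\mu)^2}$ and $\tfrac{\pi}{8}\ln(1+\mu)+\tfrac{\pi}{16}\tfrac{\mu(\mu-2)}{(1+\mu)^2}$ for a disk of normalized squared radius $\mu=R^2/x_C^2$, which reproduces \eqref{I3.low}--\eqref{I5.up} verbatim upon setting $\mu=\rho^2$ and $\mu=2\rho^2$; so the inscribed/circumscribed-disk sandwich is indeed the mechanism behind the stated inequalities. The one concrete discrepancy you should resolve is the radius normalization: with the observation square $[-L/2,L/2]^2$ declared in the paper and $\rho=L/x_C$, your radii $L/2$ and $L/\sqrt2$ would yield $\ln(1+\rho^2/4)$ and $\ln(1+\rho^2/2)$, not the printed $\ln(1+\rho^2)$ and $\ln(1+2\rho^2)$; the printed bounds (and, by a numerical check, \eqref{eq:I1} as well) correspond to a square of half-side $\rho x_C$, i.e., radii $L$ and $\sqrt2 L$. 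This factor-of-two mismatch appears to be a convention inconsistency internal to the paper rather than an error in your method, but as written your plan would not reproduce the stated constants, so you should state explicitly which half-side your $\rho$ refers to before invoking the disk bounds.
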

\begin{proof}
Can be derived by following the steps in~\cite[App. C]{damico2021}.
\end{proof}
\vspace{-0.3cm}
\subsection{Asymptotic analysis for CPL when $\rho= L/x_C \to \infty$}
The results of Corollary 1 allow a simple analysis of the behaviour of the CRBs \eqref{eq:crb_x}--\eqref{eq:crb_z} in the asymptotic regime $\rho \to \infty$. 
\begin{corollary}
Under Assumption~\ref{CPL assumption}, if $\rho = L/x_C \to \infty$ then
\begin{align}
\label{eq:CRBXtoINF}
\lim\limits_{\rho \to \infty} \mathrm{CRB}(x_C)&=\dfrac{\mathrm{SNR}^{-1}} {3 \pi^3} \lambda^2\\
\label{eq:CRBYtoINF}
\lim\limits_{\rho \to \infty} \mathrm{CRB}(y_C)&=\dfrac{\mathrm{SNR}^{-1}} {3 \pi^3} \dfrac{\lambda^2}{\ln \rho}\\ \label{eq:CRBZtoINF}
\lim\limits_{\rho \to \infty} \mathrm{CRB}(z_C)&=\dfrac{\mathrm{SNR}^{-1}} { \pi^3} \dfrac{\lambda^2}{\ln \rho}.
\end{align}
\begin{proof}
See~\cite[App. D]{damico2021}.
\end{proof}
\end{corollary}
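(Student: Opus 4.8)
The plan is to substitute the closed-form and two-sided expressions of Corollary~1 into \eqref{eq:crb_x}--\eqref{eq:crb_z} and extract the leading behaviour as $\rho\to\infty$. I read this limit as taken with $x_C$ held fixed (so that $x_C\gg\lambda$ remains valid and the Corollary~1 approximations apply) and $L\to\infty$. Throughout I use $k=2\pi/\lambda$, hence $k^2=4\pi^2/\lambda^2$, to turn the $k^2$ prefactors into the $\lambda^2$ scaling of the statement.

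First I would dispatch $x_C$. From the approximation \eqref{eq:crb_x_approx} and the closed form \eqref{eq:I1}, the key fact is $\arctan\!\bigl(\rho/\sqrt{1+\rho^2}\bigr)\to\pi/4$, while the prefactor $\tfrac{\rho k^2}{2(1+\rho^2)}$ and the bracket combine so that the dominant piece is $\tfrac{\rho k^2}{2(1+\rho^2)}\cdot\tfrac{6\rho^2}{\sqrt{1+\rho^2}}\cdot\tfrac{\pi}{4}\to\tfrac{3\pi}{4}k^2$, the residual $\rho/(1+2\rho^2)$ vanishing. Thus $\mathscr{I}_1\to\tfrac{3\pi}{4}k^2$, and inserting $k^2=4\pi^2/\lambda^2$ gives $\mathrm{CRB}(x_C)\to\mathrm{SNR}^{-1}\lambda^2/(3\pi^3)$, which is \eqref{eq:CRBXtoINF}.

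Next I would treat $y_C$ by a squeeze. Using \eqref{eq:crb_y_approx} with the bound \eqref{I3.1}, the point is that $\ln(1+a\rho^2)=2\ln\rho+O(1)$ for any $a>0$, so both $k^{-2}\mathscr{I}_3^{(l)}$ in \eqref{I3.low} and $k^{-2}\mathscr{I}_3^{(u)}$ in \eqref{I3.up} behave like $\tfrac{3\pi}{4}\ln\rho$, their rational correction terms tending to the finite constant $\tfrac{5\pi}{16}$. Dividing by $\ln\rho$ and letting $\rho\to\infty$ forces $\mathscr{I}_3\sim\tfrac{3\pi}{4}k^2\ln\rho$, and the same substitution yields \eqref{eq:CRBYtoINF}. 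The coordinate $z_C$ is the only one needing the full denominator $\mathscr{I}_5+\mathscr{I}_6$. The identical squeeze applied to \eqref{I5.1}, with leading term $\tfrac{\pi}{8}\ln(1+a\rho^2)\sim\tfrac{\pi}{4}\ln\rho$ in both \eqref{I5.low}--\eqref{I5.up}, gives $\mathscr{I}_5\sim\tfrac{\pi}{4}k^2\ln\rho$, whereas the closed form \eqref{I6.1} shows $x_C^2\mathscr{I}_6\to\tfrac{9\pi}{8}$, so $\mathscr{I}_6$ stays bounded. Since $\mathscr{I}_5\to\infty$ logarithmically, $\mathscr{I}_5+\mathscr{I}_6\sim\tfrac{\pi}{4}k^2\ln\rho$ and \eqref{eq:CRBZtoINF} follows.

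The main obstacle I anticipate is making the two squeeze arguments rigorous rather than heuristic: one must confirm that the upper and lower envelopes of $\mathscr{I}_3$ (and of $\mathscr{I}_5$) not only share the same logarithmic coefficient but differ by a quantity growing strictly slower than $\ln\rho$, so that after dividing by $\ln\rho$ both bounds are forced to the same limit. This rests on the elementary estimate $\ln(1+2\rho^2)-\ln(1+\rho^2)\to\ln 2=O(1)$, which renders the discrepancy between the two bounds negligible against the $\ln\rho$ growth. For $z_C$ the extra care lies in verifying that $\mathscr{I}_6$ is genuinely $O(1)$ in $\rho$ and hence contributes nothing at leading order, a fact the closed form \eqref{I6.1} makes transparent.
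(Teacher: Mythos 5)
Your proposal is correct and follows exactly the route the paper intends: the text preceding Corollary~2 states that the asymptotics follow from Corollary~1, and your limits ($\mathscr{I}_1\to\tfrac{3\pi}{4}k^2$, the squeeze giving $\mathscr{I}_3\sim\tfrac{3\pi}{4}k^2\ln\rho$ and $\mathscr{I}_5\sim\tfrac{\pi}{4}k^2\ln\rho$, and $\mathscr{I}_6=O(1)$) all check out, including the shared correction constant $\tfrac{5\pi}{16}$ in both envelopes of $\mathscr{I}_3$. Substituting $k^2=4\pi^2/\lambda^2$ indeed reproduces \eqref{eq:CRBXtoINF}--\eqref{eq:CRBZtoINF}.
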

It is interesting to note that the CRBs for the estimation of $y_C$ and $z_C$ goes to zero as $\rho$ increases unboundedly. This is in contrast with the results in~\cite[Eq. (26)]{Hu2018b} where it is shown that the asymptotic CRBs are identical for all the three dimensions and depend solely on the wavelength $\lambda$. This difference is a direct consequence of the different radiation and signal models used for the computation of CRBs. Indeed, in \cite{Hu2018b} the source is assumed to radiate isotropically and the scalar field \eqref{PVScalarField} is used for deriving the bounds.
\section{Numerical Analysis}

\begin{figure}[t!]
	\includegraphics[width=1.1\columnwidth]{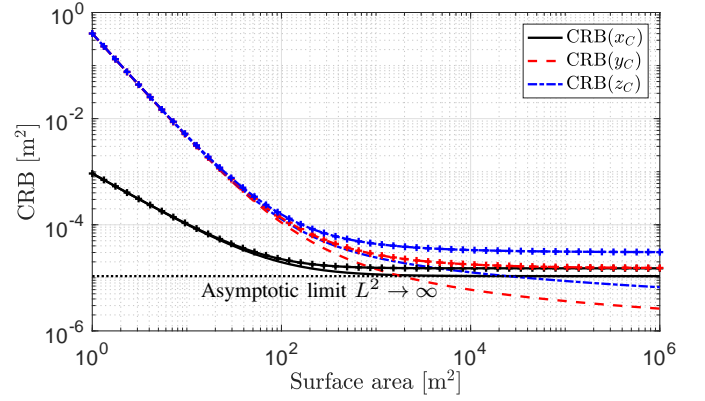}
	\put(-200,38){\footnotesize Asymptotic limit $L^2\to \infty$}
	\caption{CRB for the three Cartesian coordinates in m$^2$ for the CPL condition as function of surface area when $\lambda =0.01$\,m and $x_C = 6$\,m. The curves with markers indicate the performance obtained by using~\eqref{eq:electric-field-lund}.}
	\label{fig:crb-vs-size}
\end{figure}

Numerical simulations are now used to evaluate the estimation accuracy under different operating conditions. 
We assume that $|\chi|^2 = 1$\,V$^2$, $\sigma^2=10$\,V$^2$ such that $\mathrm{SNR}=-10$\,dB. Due to space limitations, only the CPL case is considered. However, the general conclusions and behaviours are also valid for other cases; an accurate analysis is provided in the extended version~\cite{damico2021}.

Fig.~\ref{fig:crb-vs-size} plots the CRB separately for the three Cartesian coordinates in m$^2$ as function of the surface area $L^2$ when $\lambda =0.01$\,m (corresponding to $f_c=30$\,GHz) and the dipole is located at a distance of $x_C = 6$\,m. We see that all the CRBs decrease fast with the surface area. An accuracy on the order of tens of centimeters (as required for example in future automotive and industrial applications, e.g.,~\cite{Witrisal5Gaccuracy}) is achieved for surface areas of practical interest, i.e., in the range $1 \le L^2\le 25$\,m$^2$.  We notice that, in this range, $\mathrm{CRB}(x_C)$ is much lower than $\mathrm{CRB}(y_C)$ and $\mathrm{CRB}(z_C)$. However, as $L^2$ increases, $\mathrm{CRB}(x_C)$ converges to the lower limit~\eqref{eq:CRBXtoINF} whereas the other two decrease unboundedly as $1/\ln \rho$. These results corroborate Corollary 2 and show that $L^2>10^4$\,m$^2$ is needed for $\mathrm{CRB}(x_C)$ to approach the lower limit. Comparisons are made with the CRBs obtained from~\eqref{PVScalarField} in Remark 1 (see the curves with markers). Particularly, notice that, under Assumption 3, ~\eqref{PVScalarField} reduces to
\begin{align}\label{eq:electric-field-lund}
E=\chi \dfrac{e^{- \imagunit kr}}{r^{5/2}} \sqrt{x_C\left[ x_C^2+(y-y_C^2) \right]}.
\end{align}
Only marginal differences are observed between the two methods for areas of practical interest. However, different limits are achieved as $L^2$ increases. In conclusion, both are accurate and might be used to predict scaling behaviors, but the proposed one is needed to study the fundamental limits.


Fig.~\ref{fig:distance_comp} plots CRBs as a function of the distance $x_C$ of the dipole when $L^2 = 9$\,m$^2$ and $\lambda =0.01$ or $0.001$\,m. The CRBs have the same behavior irrespective of the wavelength $\lambda$. As expected, higher accuracies are achieved when $\lambda = 0.001$\,m. We see that $\mathrm{CRB}(y_C)$ and $\mathrm{CRB}(z_C)$ increase fast as the distance increases. On the other hand, $\mathrm{CRB}(x_C)$ starts to increase when the surface has a size comparable to the distance. In the considered scenario, this happens for $x_C \ge L/3$. 

\begin{figure}[t!]
	\includegraphics[width=1.1\columnwidth]{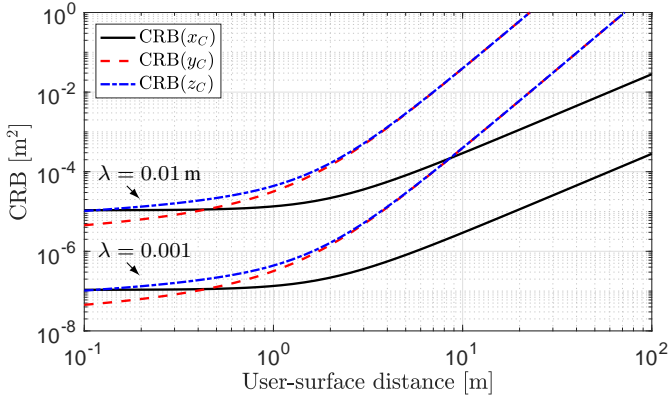}
	\put(-236,53){\footnotesize $\lambda =0.001$}
	 \put(-225,51){\vector(1, -1){5}}
	\put(-236,82){\footnotesize $\lambda =0.01$\,m}
	\put(-225,80){\vector(1, -1){5}}
	\caption{CRB in m$^2$ as a function of the dipole distance $x_C$ in CPL when $L^2 = 9$\,m$^2$ and $\lambda =0.01$ or $0.001$\,m.}
	\label{fig:distance_comp}
\end{figure}


\section{Conclusions}
Large antenna arrays and high frequencies push towards the near-field regime, which opens up opportunities for new signal processing algorithms for positioning. Motivated by the need of establishing ultimate bounds, we considered the electromagnetic field over a rectangular spatial region as a function of the radiation vector at the source. This was used to compute the CRB for the three-dimensional (3D) spatial location of a dipole, whose center is on the line perpendicular to  surface center and whose 3D orientation is a priori known. Numerical results showed that a centimeter-level accuracy can be achieved in the near-field of surfaces of practical size (i.e., in the range of a few meters) in the mmWave and sub-THz bands. Asymptotic expressions were also given in closed-form to show the scaling behaviors with respect to surface area and wavelength. 

The ultimate goal of positioning is to precisely estimate not only the 3D spatial location, but also the 3D orientation of the source. This requires the computation of the CRB with no a priori knowledge of the orientation, which is addressed in the extended journal version~\cite{damico2021}.

\bibliographystyle{IEEEtran}
\bibliography{refs}

\end{document}